\documentclass[aps,pra,reprint,twocolumn,superscriptaddress,nofootinbib]{revtex4-2}
\usepackage{amsmath,amssymb,amsthm,mathtools}
\usepackage{physics}
\usepackage{comment}
\usepackage{bm}
\usepackage{microtype}
\usepackage[colorlinks=true,linkcolor=blue,citecolor=blue,urlcolor=blue]{hyperref}
\usepackage[capitalize,noabbrev]{cleveref}

\newtheorem{theorem}{Theorem}
\newtheorem{lemma}[theorem]{Lemma}
\newtheorem{prot}{Protocol}

\newtheorem{corollary}[theorem]{Corollary}
\newtheorem{definition}[theorem]{Definition}

\newtheorem{remark}[theorem]{Remark}

\crefname{theorem}{Theorem}{Theorems}
\crefname{lemma}{Lemma}{Lemmas}

\crefname{corollary}{corollary}{corollaries}
\crefname{definition}{Definition}{Definitions}
\crefname{assumption}{assumption}{assumptions}
\crefname{remark}{Remark}{Remarks}

\newcommand{\EprePA}{\mathbf{E}_\mathrm{prePA,EV}}
\newcommand{\dist}[2]{d\!\left(#1,#2\right)}

\newcommand{\bS}{\mathbf{S}}
\newcommand{\Cpre}{C_{\mathrm{prePA}}}
\newcommand{\CEV}{C_{\mathrm{EV}}}
\newcommand{\HEV}{H_{\mathrm{EV}}}
\newcommand{\FEV}{F_{\mathrm{EV}}}
\newcommand{\Efin}{E_{\mathrm{fin}}}
\newcommand{\Omlen}[1]{\Omega^{\mathrm{len}}_{#1}}
\newcommand{\Omfin}[1]{\Omega^{\mathrm{fin}}_{#1}}
\newcommand{\OmFK}[1]{\Omega^{(F_K)}_{#1}}
\newcommand{\Ommatch}{\Omega_{\mathrm{match}}}
\newcommand{\OmEVpost}{\Omega^\mathrm{comb}_{\mathrm{EV}}}
\newcommand{\Omacc}{\Omega_{\mathrm{acc}}}
\newcommand{\Omcoll}{\Omega_{\mathrm{coll}}}

\newcommand{\one}{\mathbf{1}}

\newcommand{\elloutput}{\ell_\mathrm{output}}
\newcommand{\ellEV}{\ell_\mathrm{EV}}
\newcommand{\ellprePA}{\ell_\mathrm{prePA}}

\newcommand{\HPA}{H_\mathrm{PA}}

\newcommand{\cobs}{c_\mathrm{obs}}

\begin{document}

\title{Combining Error Verification and Privacy Amplification in Quantum Key Distribution}

  \author{Devashish Tupkary}
\email{devashish.jayant.tupkary@umontreal.ca}
	 \affiliation{Institute for Quantum Computing and Department of Physics and Astronomy, University of Waterloo, Waterloo, Ontario, Canada, N2L 3G1}	
     \affiliation{Department of Computer Science and Operations Research,
Université de Montréal,
Montréal, Quebec, Canada, H3T 1J4}

 \author{Zhiyao Wang}
	 \affiliation{Institute for Quantum Computing and Department of Physics and Astronomy, University of Waterloo, Waterloo, Ontario, Canada, N2L 3G1}	
  \author{Ernest Y.-Z. Tan}
      \affiliation{Department of Physics, National University of Singapore, Singapore, Singapore, Singapore, 117542}

\begin{abstract}
Quantum Key Distribution (QKD) protocols typically involve 
two separate steps of error-verification and privacy amplification. Both steps involve the application of a hash function sampled from a suitable hash family to classical data. A potentially simpler alternative is to implement a single-step hashing procedure, announce and compare a fixed number of bits of its output for error-verification, and retain the remaining bits as the final output key. This reduces the complexity of classical postprocessing, the amount of classical communication, and the amount of randomness needed in the protocol. While this construction has appeared in the literature, a suitable mathematical justification is lacking. In this work, we show that the security of such protocols, under certain simple conditions, follows from a slightly modified notion of secrecy of the protocol without the final error-verification step. In particular, we show that for most entropy-based proofs, the combined protocol can be implemented with no change to the key rate, and a minor change to the security parameter.
\end{abstract}

\maketitle
\section{Introduction}\label{sec:introduction}

QKD protocols involve the transmission and
measurement of quantum states, followed by classical postprocessing. Towards the end of such protocols, the honest parties Alice and Bob hold ``pre-amplification strings" of equal length, which are expected (but not guaranteed)
to agree. To check their agreement, they apply a common hash function,
sampled independently from a suitable family, and compare
the resulting outputs over an authenticated classical channel\footnote{In this work, we assume that this means all messages sent are received correctly by the other party, some time after they are sent. Thus, we restrict attention to protocols in which Alice and Bob always agree on the output key length and abort status. For work that relaxes this assumption, see Ref.~\cite{tupkary2026authenticationsecurityproofsquantum}.}. If the
outputs differ, they abort the protocol. This step, called error
verification (EV), ensures that if the pre-amplification strings contain an error, then the protocol aborts with high probability; i.e., it ensures correctness. Eve may still retain some information about the pre-amplification strings. Alice and Bob therefore apply another common hash
function, chosen using fresh randomness and announced over the
authenticated classical channel, to obtain shorter strings. This step,
called privacy amplification (PA), is intended to ensure secrecy: Alice's
final key should be close to uniform and independent of Eve's registers. 

A natural implementation choice is to combine EV and PA into a single
hashing step. Alice and Bob apply the same hash function to their
respective strings. Alice announces a few bits of her output, and Bob checks whether they agree
with the corresponding bits of his output. If they agree, both parties
discard these bits and retain the remaining bits as their final keys.
Otherwise, they abort. 

This construction avoids a separate EV hash
evaluation and can reduce the randomness and public communication needed
to specify the hash functions.\footnote{Note that, a uniformly random binary
Toeplitz matrix mapping $n$ input bits to $m$ output
bits is specified by $n+m-1$ independent random bits,
which can be announced as the public seed.} 

The earliest explicit proposal for such a combined construction that we have
located is Ref.~\cite{HughesEtAl2002}, although the construction was neither
implemented nor analyzed there. Ref.~\cite{Fossier_2009} proposed and implemented
a similar procedure in which a randomly chosen subset of the final key
bits is compared and discarded. Ref.~\cite{ZhangEtAl2014} again proposed a
similar construction and analyzed its correctness, but did not analyze the
secrecy of the retained key conditioned on the verification test passing.
Ref.~\cite{MaroyEtAl2014} also proposed comparing bits of the PA output but did not analyze the secrecy of the retained key. We refer the reader to Ref.~\cite{Fung_2010} for an overview
of classical post-processing in QKD, and
Ref.~\cite{Tsurumaru_securityloophole2024} for a recent treatment of a subtlety in error-verification in security proofs based on the phase-error-correction
approach~\cite{koashi_simple_2009}.

Despite its simplicity, the combined protocol requires a security
argument that does not follow immediately from the usual properties of
PA. First, correctness now needs to hold when only a subset of output bits are compared. Second, even if Alice's
complete hash output is secret before verification, Bob's decision to
accept may reveal information about the retained bits. These issues are
discussed in Ref.~\cite[Section 3.4.2]{TupkaryEtAl2026}.

In this work, we give a simple analysis of the combined
protocol. In \cref{sec:setup} we setup notation and describe the combined protocol we analyze. In \cref{sec:results} we state our main result in the form of \cref{thm:main}, which states that under certain conditions, the security of the combined protocol follows from a slightly modified notion of secrecy of the protocol without the final EV step.  In \cref{subsec:impactonkeyrates} we apply this result and show that for most entropy based proofs, combining the EV and PA step causes no change to the secure key rate. In \cref{sec:conclusion} we present some concluding remarks.

\section{Notation and Protocol description}\label{sec:setup}

\subsection{Basic notation}\label{subsec:notation}

We use uppercase letters such as $X, A, B$ to denote registers. We use $S_=(A)$ to denote the set of density operators on the register $A$, and $S_{\leq}(A)$ to denote the set of subnormalized density operators on the register $A$.  We use $\norm{\rho_A}_1$ to denote the Schatten one-norm of $\rho_A$. We use $d(\rho,\sigma) \coloneq \frac{1}{2} \norm{\rho-\sigma}_1$ to denote the trace distance between two density operators. 

Classical registers are diagonal in some
fixed basis.  For a state $\rho \in S_{\leq}(CQ)$ classical on $C$, written in the form
$\rho_{CQ} = \sum_c \ketbra{c}{c} \otimes \omega_c$ 
for some $\omega_c \in S_{\leq}(Q)$,
and an event $\Omega$ defined on the register $C$, we will define a corresponding \emph{partial} state and \emph{conditional} state as, respectively,
\begin{align}
\rho_{\land\Omega} &\coloneq \sum_{c\in\Omega} \ketbra{c}{c} \otimes \omega_c, \qquad\qquad \\ \rho_{|\Omega} &\coloneq \frac{\tr{\rho}}{\tr{\rho_{\land\Omega}}} \rho_{\land\Omega} = \frac{
\sum_{c} \tr{\omega_c}
}{\sum_{c\in\Omega} \tr{\omega_c}} \rho_{\land\Omega} .
\end{align}
We refer to $\rho_{\wedge \Omega}$ as the state $\rho$ being partial\footnote{One could instead also think of it as being ``subnormalized conditioned'' on the event $\Omega$.} on $\Omega$.

\subsection{Combined EV--PA protocol}\label{subsec:protocol}

We now describe the combined protocol we analyze. All public communication during the protocol is authenticated, but known to the adversary.

\begin{prot}[QKD with Combined PA and EV] \label{prot:combined}
Following some QKD protocol, Alice and Bob possess classical
pre-amplification strings $\bS_A$, $\bS_B$ of the same length. They have determined $\ell_\mathrm{output}$ the length of output bits to be produced, which is stored in the register $F_K$. 
Eve holds a side-information register $\EprePA$ which may be correlated with the bit strings, and may include a copy of all public communication during the preceding steps (including $F_K$). 
\begin{enumerate}

\item \textbf{Fixing hash family.} Alice reads the value of $\ell_\mathrm{output}$ from the $F_K$ register, and the length $\ellprePA$ of strings stored in the $\bS_A,\bS_B$ registers. This fixes a universal$_2$ hash family from $\ellprePA$ bits to $(\elloutput+\ellEV)$ bits.
 
\item \textbf{Choosing hash function.} Alice samples the hash  seed $h$ from the specified distribution, 
\emph{independently} of all other registers, and announces it in the register $\HPA$. This fixes the function $f_h$ from the hash family.
 
\item \textbf{Applying hash function.}
Alice computes $f_h(\bS_A)$ and stores it in the register $R_A = (K_A,T_A)$ where $K_A$ stores the first $\elloutput$ bits and $T_A$ stores the last $\ellEV$ bits. Similarly, Bob computes $f_h(\bS_B)$ and stores it in the register $R_B = (K_B, T_B)$.

\item \textbf{Checking bits.} 
Alice announces $T_A$, and Bob announces the flag $ \FEV:=\one[T_A=T_B]$. If $\FEV=1$ the
final keys are $K_A$ and $K_B$ of length $\elloutput$; if $\FEV=0$, Alice and Bob replace $K_A,K_B$ with the special abort symbol  
$\perp$. 
\end{enumerate}
\end{prot}

\subsection{Definitions}\label{subsec:definitions}

 We begin by stating the security definition of a QKD protocol. 
  Let $\rho_{K_AK_B\Efin}$ denote the final state of such a protocol,
where $K_A$ and $K_B$ are Alice's and Bob's final key registers and
$\Efin$ contains all information available to Eve at the end of the
protocol, including all public communication. The key registers
may contain keys of different lengths; equivalently, they have the
direct-sum structure
\begin{equation}
  K_A=\bigoplus_{m\in\mathbb{N}_0}K_A^m,
  \qquad
  K_B=\bigoplus_{m\in\mathbb{N}_0}K_B^m,
\end{equation}
where $K_A^m$ and $K_B^m$ store $m$-bit keys. The case $m=0$
corresponds to the abort symbol $\perp$. 

\begin{definition}[Security of QKD]\label{def:qkdsecurity}
Consider a QKD protocol, with output state $\rho_{K_A K_B \Efin}$. For every $m\in\mathbb{N}_0$, let $\Omfin{m}$ denote the event that
Alice and Bob both output keys of length $m$. Then, the protocol  is
$\varepsilon^{\mathrm{secure}}$-secure if, for all possible output states $\rho_{K_A K_B \Efin}$,
\begin{equation}
  \sum_{m\geq 1}
  \dist{
    \rho_{K_AK_B\Efin\wedge\Omfin{m}}
  }{
    \tau_{K_A K_B}^{(m)}
    \otimes
    \rho_{\Efin\wedge\Omfin{m}}
  }
  \leq
  \varepsilon^{\mathrm{secure}}.
\end{equation}
where  $\tau^{(m)}_{K_A K_B} \coloneq 2^{-m}\sum_{r\in\{0,1\}^m} \ketbra{ r r}_{K_A K_B}$ denotes the ideal key state on the output key registers.
\end{definition}

We next recall the relevant definitions of hash families.

\begin{definition}[$\delta$-almost-universal$_2$,
universal$_2$, and ideal universal$_2$ hash families]
\label{def:2universal}
Let $D$ be a domain and let $D'$ be a finite codomain. A hash
family consists of a set $\mathcal{H}$ of functions $h:D\to D'$,
together with a specified probability distribution over
$\mathcal{H}$. The family is
$\delta$-almost-universal$_2$ ($\delta$-AU$_2$) if, for every pair
of distinct elements $x,y\in D$,
\begin{equation}
  \Pr_h\!\left[h(x)=h(y)\right]
  \leq \delta,
  \label{eq:AU2-definition}
\end{equation}
where the probability is taken over the choice of $h$. The family is universal$_2$ if
\begin{equation}
  \Pr_h\!\left[h(x)=h(y)\right]
  \leq \frac{1}{|D'|}
\end{equation}
for every pair of distinct elements $x,y\in D$. 
\end{definition}

For the combined protocol, it is not sufficient to only consider 
collisions of the complete hash outputs. We must also control
collisions of the $\ellEV$ bits that are compared during EV. This motivates the following definition.

\begin{definition}[Truncated $(\ellEV,\delta)$-AU$_2$ property]
\label{def:tail-au2}
The hash families used in the protocol from \cref{subsec:protocol} satisfy the
\emph{truncated $(\ellEV,\delta)$-AU$_2$ property} if, for every admissible
choice of $\ellprePA$ and $\elloutput$, and every pair of distinct
strings $
  s,s'\in\{0,1\}^{\ellprePA}$, we have
\begin{equation}
  \Pr_h\!\left[
    \operatorname{last}_{\ellEV}\!\left(f_h(s)\right)
    =
    \operatorname{last}_{\ellEV}\!\left(f_h(s')\right)
  \right]
  \leq \delta.
  \label{eq:tail-au2}
\end{equation}
Here, $\operatorname{last}_{\ellEV}$ selects the last $\ellEV$
bits of the $(\elloutput+\ellEV)$-bit hash output.
The probability is taken over the choice of the hash seed $h$
according to the distribution specified by the protocol.
\end{definition}

\section{Results}\label{sec:results}

For the QKD protocol with the combined PA and EV step, we define the event $
  \Ommatch
  := \{\bS_A=\bS_B\}$ when the pre-amplification strings match, and $
  \OmEVpost
  := \{T_A=T_B\}
   = \{\FEV=1\}$ when the error-verification check passes. 
  Note that for our protocol, $
  \Efin:=\EprePA\HPA T_A\FEV$
denote Eve's information at the end of the combined protocol.
Recall that $\EprePA$ already includes all public communication
preceding the combined protocol, including the register $F_K$. We can now state and prove the following theorem. 

\begin{theorem}[Security of QKD protocol with combined EV and PA step]
\label{thm:main}

Let $\rho_{R_A\EprePA\HPA}$ denote the state immediately after
the hash function has been applied and before $T_A$ and $\FEV$
are announced.
Suppose that for all such possible states,  we have  
\begin{equation}
\begin{aligned}
  &\sum_{\elloutput\geq 1}
  d \Big(
    \rho_{R_A\EprePA\HPA
      \wedge \OmFK{\elloutput} \wedge\Ommatch}, \\     
 &   \tau^{(\elloutput+\ellEV)}_{R_A}
    \otimes
    \rho_{\EprePA\HPA
      \wedge \OmFK{\elloutput}  \wedge\Ommatch} \Big) \\ 
 & \leq\varepsilon,
  \label{eq:matching-secrecy}
\end{aligned}
\end{equation}
where $\OmFK{\elloutput}$ denotes the event that $F_K$ stores $\elloutput$. That is, a modified notion of secrecy holds. 

Suppose that the hash family used satisfies the truncated $(\ellEV,\delta)$-AU$_2$ property. Then, 
the overall protocol is 
$\bigl(2\varepsilon+\delta)$-secure, that is:
\begin{equation}
  \sum_{m\geq 1}
  \dist{
    \rho_{K_AK_B\Efin\wedge\Omfin{m}}
  }{
    \tau_{K_AK_B}^{(m)}
    \otimes
    \rho_{\Efin\wedge\Omfin{m}}
  }
  \leq
  2\varepsilon+\delta.
  \label{eq:combined-secrecy}
\end{equation}
is satisfied for all output states $ \rho_{K_AK_B\Efin}$.
\end{theorem}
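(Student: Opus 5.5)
The plan is to split each final-state term according to whether the pre-amplification strings match. Conditioning on $\Ommatch$ reduces the problem to the assumed secrecy bound \eqref{eq:matching-secrecy}, while the mismatch event costs only $\delta$ by the truncated AU$_2$ property.

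First, for each $m\geq 1$ the event $\Omfin{m}$ equals $\OmFK{m}\wedge\OmEVpost$. We write
\begin{equation}
\rho_{\wedge\Omfin{m}} = \rho_{\wedge\Omfin{m}\wedge\Ommatch} + \rho_{\wedge\Omfin{m}\wedge\neg\Ommatch},
\end{equation}
and bound the trace distance via the triangle inequality in three pieces:
\begin{enumerate}
\item the matching part $\rho_{K_AK_B\Efin\wedge\Omfin{m}\wedge\Ommatch}$ against $\tau^{(m)}\otimes\rho_{\Efin\wedge\Omfin{m}\wedge\Ommatch}$;
\item the non-matching part $\rho_{K_AK_B\Efin\wedge\Omfin{m}\wedge\neg\Ommatch}$ alone;
\item the term $\tau^{(m)}\otimes\rho_{\Efin\wedge\Omfin{m}\wedge\neg\Ommatch}$ alone.
\end{enumerate}
Pieces 2 and 3 each have trace norm $\Pr[\Omfin{m}\wedge\neg\Ommatch]$, so together they contribute at most $\Pr[\OmEVpost\wedge\neg\Ommatch]$ after summing over $m$. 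This is at most $\delta$ by the truncated $(\ellEV,\delta)$-AU$_2$ property. The reason is that $h$ is sampled independently of $\bS_A,\bS_B$: conditioned on any values $s\neq s'$, the tails collide with probability at most $\delta$, and averaging gives the bound. Since each of pieces 2 and 3 carries a factor $\tfrac12$ in the trace distance, the combined contribution is $\delta$ rather than $2\delta$.

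For piece 1, on $\Ommatch$ we have $R_B=R_A$, so $K_B=K_A$, $T_B=T_A$, and $\FEV=1$ automatically. The state $\rho_{K_AK_B\Efin\wedge\Omfin{m}\wedge\Ommatch}$ is therefore obtained from $\rho_{R_A\EprePA\HPA\wedge\OmFK{m}\wedge\Ommatch}$ by a fixed channel $\Phi$. This channel copies $K_A$ into $K_B$, keeps $T_A$ as part of Eve's register, and sets $\FEV=1$.

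Applying the same $\Phi$ to $\tau^{(m+\ellEV)}_{R_A}\otimes\rho_{\EprePA\HPA\wedge\OmFK{m}\wedge\Ommatch}$ yields
\begin{equation}
\tau^{(m)}_{K_AK_B}\otimes\tau_{T_A}\otimes\rho_{\EprePA\HPA\wedge\OmFK{m}\wedge\Ommatch}\otimes\ketbra{1}_{\FEV}.
\end{equation}
Call this state $\sigma^{(m)}$. By data processing, the distance between the true matching state and $\sigma^{(m)}$ is at most $\varepsilon_m$, the $m$-th term in \eqref{eq:matching-secrecy}.

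It remains to relate $\sigma^{(m)}$ to $\tau^{(m)}\otimes\rho_{\Efin\wedge\Omfin{m}\wedge\Ommatch}$. Tracing out $K_AK_B$ from the true matching state and from $\sigma^{(m)}$ (again data processing) gives
\begin{equation}
d\bigl(\rho_{\Efin\wedge\Omfin{m}\wedge\Ommatch},\ \tau_{T_A}\otimes\rho_{\EprePA\HPA\wedge\ldots}\otimes\ketbra{1}\bigr)\leq\varepsilon_m.
\end{equation}
Tensoring with $\tau^{(m)}$ preserves this distance. The triangle inequality then gives a piece-1 contribution of at most $2\varepsilon_m$, and summing over $m$ gives $2\varepsilon$.

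The main obstacle is getting the partial-state bookkeeping right. We need the combined event $\Omfin{m}\wedge\Ommatch$ to coincide exactly with $\OmFK{m}\wedge\Ommatch$, which holds because a match forces the EV test to pass. We also need the subnormalized trace distances to split additively across $\Ommatch$ and its complement. Both steps rely on $\Ommatch$, $\OmFK{m}$ and $\OmEVpost$ being events on classical registers, so that all the partial states are block-diagonal. The triangle inequality on subnormalized operators then applies cleanly, and the total is $2\varepsilon+\delta$.
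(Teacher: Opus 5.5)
Your proposal is correct and follows essentially the same route as the paper. You split on $\Ommatch$, bound the mismatching part by $\delta$ using the truncated AU$_2$ property and the independence of the seed, and on $\Ommatch$ insert the state with both $K_A$ and $T_A$ uniform; the second triangle-inequality term is controlled by tracing out the key (data processing), which gives $2\varepsilon$. The only difference is presentational: you push both states through an explicit copy-and-announce channel $\Phi$, whereas the paper first drops the redundant $K_B$ and $\FEV$ registers and works directly with $K_AT_A\EprePA\HPA$.
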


\begin{proof}
The proof follows from the following chain of inequalities (explained below). We start with the quantity that we are required to bound:

\begin{widetext}

\begin{align}
&\sum_{\elloutput\geq1}
  \dist{
    \rho_{K_AK_B\Efin\wedge\Omfin{\elloutput}}
  }{
    \tau_{K_AK_B}^{(\elloutput)}
    \otimes
    \rho_{\Efin\wedge\Omfin{\elloutput}}
  }
  \label{eq:proof-start}
\\
&\leq
  \sum_{\elloutput\geq1}
  \dist{
    \rho_{K_AK_B\Efin
      \wedge\Omfin{\elloutput}\wedge\Ommatch}
  }{
    \tau_{K_AK_B}^{(\elloutput)}
    \otimes
    \rho_{\Efin
      \wedge\Omfin{\elloutput}\wedge\Ommatch}
  }
  \nonumber\\[-0.25em]
&\qquad+
  \sum_{\elloutput\geq1}
  \dist{
    \rho_{K_AK_B\Efin
      \wedge\Omfin{\elloutput}\wedge\Ommatch^{\mathrm C}}
  }{
    \tau_{K_AK_B}^{(\elloutput)}
    \otimes
    \rho_{\Efin
      \wedge\Omfin{\elloutput}\wedge\Ommatch^{\mathrm C}}
  }
  \label{eq:proof-split}
\\
&\leq
  \sum_{\elloutput\geq1}
  \dist{
    \rho_{K_AK_B\Efin
      \wedge\Omfin{\elloutput}\wedge\Ommatch}
  }{
    \tau_{K_AK_B}^{(\elloutput)}
    \otimes
    \rho_{\Efin
      \wedge\Omfin{\elloutput}\wedge\Ommatch}
  }
  +\delta
  \label{eq:proof-bound-collisions}
\\
&=
  \sum_{\elloutput\geq1}
  \dist{
    \rho_{K_AT_A\EprePA\HPA
      \wedge\OmFK{\elloutput}\wedge\Ommatch}
  }{
    \tau_{K_A}^{(\elloutput)}
    \otimes
    \rho_{T_A\EprePA\HPA
      \wedge\OmFK{\elloutput}\wedge\Ommatch}
  }
  +\delta
  \label{eq:proof-remove-copy}
\\
&\leq
  \sum_{\elloutput\geq1}
  \dist{
    \rho_{K_AT_A\EprePA\HPA
      \wedge\OmFK{\elloutput}\wedge\Ommatch}
  }{
    \tau_{K_A}^{(\elloutput)}
    \otimes
    \tau_{T_A}^{(\ellEV)}
    \otimes
    \rho_{\EprePA\HPA
      \wedge\OmFK{\elloutput}\wedge\Ommatch}
  }
  \nonumber\\[-0.25em]
&\qquad+
  \sum_{\elloutput\geq1}
  \dist{\tau_{K_A}^{(\elloutput)}  \otimes 
    \tau_{T_A}^{(\ellEV)}
    \otimes
    \rho_{\EprePA\HPA
      \wedge\OmFK{\elloutput}\wedge\Ommatch}
  }{\tau_{K_A}^{(\elloutput)} \otimes 
    \rho_{T_A\EprePA\HPA
      \wedge\OmFK{\elloutput}\wedge\Ommatch}
  }
  +\delta
  \label{eq:proof-triangle}
\\
&\leq
  2\sum_{\elloutput\geq1}
  \dist{
    \rho_{R_A\EprePA\HPA
      \wedge\OmFK{\elloutput}\wedge\Ommatch}
  }{
    \tau_{R_A}^{(\elloutput+\ellEV)}
    \otimes
    \rho_{\EprePA\HPA
      \wedge\OmFK{\elloutput}\wedge\Ommatch}
  }
  +\delta
  \label{eq:proof-discard-key}
\\
&\leq
  2\varepsilon+\delta.
  \label{eq:proof-final}
\end{align}
\end{widetext}

To obtain \cref{eq:proof-split}, we split each partial state
according to whether $\Ommatch$ occurs (or not) and apply the triangle
inequality. 
For \cref{eq:proof-bound-collisions}, fix distinct values of
$\bS_A$ and $\bS_B$, together with all other values fixed before
the hash seed is chosen. Since the seed is sampled independently,
the truncated $\delta$-AU$_2$ property implies that $T_A=T_B$
with probability at most $\delta$. Thus, the probability of the protocol outputting a non-zero key while the pre-amplification strings differ is upper bounded by $\delta$. 
To obtain \cref{eq:proof-remove-copy}, we use the fact that for the combined protocol, $\Ommatch \wedge \Omfin{\elloutput} = \Ommatch  \wedge \OmFK{\elloutput}$, and the fact that conditioned on $\Ommatch$, $K_AT_A$ and $K_BT_B$ store identical strings. Thus, we do not need to add the register $\FEV$ since it stores a deterministic value conditioned on $\Ommatch$.

Equation~\eqref{eq:proof-triangle} follows by inserting a state
in which both $K_A$ and $T_A$ are uniform and applying the
triangle inequality. The second trace distance in that expression is 
obtained from the first distance by tracing out $K_A$. Thus by using the fact that CPTP maps cannot increase trace distance, we obtain \cref{eq:proof-discard-key}.
Finally, \cref{eq:proof-final} follows directly from the first
hypothesis of \cref{thm:main}.  This concludes our proof.
\end{proof}

\subsection{Impact on key rates}
\label{subsec:impactonkeyrates}

The secrecy statement required in \cref{eq:matching-secrecy} is
closely related to the secrecy statement normally proved for a protocol
without the final EV step. The difference is that the state in
\cref{eq:matching-secrecy} is partial on the event $\Ommatch$. One may wonder whether this conditioning on $\Ommatch$ can be removed; we show in Appendix~\ref{app:random-matrix-counterexample} a counter example where the state before hashing (without conditioning on $\Ommatch$) is perfectly secret, but Eve learns one bit of information about the final key with non-negligible probability for suitable parameter choices.  Thus, we do not expect to get rid of the (subnormalized) conditioning on $\Ommatch$ in \cref{thm:main} without imposing additional assumptions. 

Thankfully, for proofs based on the leftover hashing lemma for smooth min entropy \cite{TomamichelLeverrier2017} or {R\'enyi} entropy \cite{dupuisentropy}, this does not pose a problem. For simplicity, we only briefly describe how those proofs can be suitably adapted to \cref{thm:main}, since a full analysis requires a number of  details that would distract us from the matter at hand. Note that such proofs require the hashing to be universal$_2$, on top of the truncated $(\ellEV,\delta)$-AU$_2$ property required by \cref{thm:main}. Toeplitz hashing, a widely used procedure, satisfies both properties (see Appendix~\ref{app:toeplitz-truncated-au2}). In our description below, we use $\sigma$ to refer to the states corresponding to a protocol implementing separate EV and PA, and $\rho$ to refer to states corresponding to a protocol which implements the combined EV and PA step. Note that the events $\Omega_\mathrm{EV}$, and registers $\FEV$ are slightly different in the two protocols.

Such proofs consider different events $\Omega(\cobs)$ corresponding to different observations $\cobs$ during the protocol. These observations determine the length $\ell(\cobs) = \elloutput$ of the output key that can be produced. Then, the leftover hashing lemma is applied on each such state separately, resulting in an expression that looks like:
\begin{equation} \label{eq:temp1}
  \begin{aligned}
&\sum_{\cobs} \Pr(\Omega(\cobs) \wedge \Omega_\mathrm{EV}) \times \\
& 2^{t_1 \left( \mathbb{H}(\bS_A| \EprePA \HEV \CEV \FEV)_{\sigma | \Omega(\cobs) \wedge \Omega_\mathrm{EV}}-\ell(\cobs) \right) +t_2}
  \end{aligned}
\end{equation}
where $t_1$ and $t_2$ are constants and $\mathbb{H}$ is the appropriate entropy\footnote{If one uses smooth-min entropy, one also obtains an additive constant which we omit here, since it does not affect our analysis}, and $\sigma_{\bS_A \EprePA}$ is the state just before privacy amplification and error-verification, $\HEV$ and $\CEV$ denote classical registers storing the hash choice and hash output of the separately implemented EV procedure, and $\FEV$ denotes the communicated bit indicating success or failure of EV. Thus, Alice and Bob additionally communicate $\HEV, \CEV, \FEV$ before performing the separate PA step. Here $\Omega_\mathrm{EV}$ is the event that error-verification (when implemented separately, before PA) has passed.

This expression is first simplified by noting that the register $\FEV$ takes on a deterministic value when conditioned on $\Omega_\mathrm{EV}$ and can be safely ignored. Then, the conditioning on $\Omega_\mathrm{EV}$ is removed via suitable properties of the entropy (see for e.g. \cite{kamin2025renyisecurityframeworkcoherent,tupkary2026rigoroussecurityproofspractical,TupkaryEtAl2026,TomamichelLeverrier2017}). After doing so, we obtain
\begin{equation} \label{eq:temp2}
  \begin{aligned}
&\sum_{\cobs} \Pr(\Omega(\cobs) ) \times \\
& 2^{t_1 \left( \mathbb{H}(\bS_A| \EprePA \HEV \CEV )_{\sigma | \Omega(\cobs)}-\ell(\cobs) \right) +t_2}
  \end{aligned}
\end{equation}
This is further simplified by removing the $\HEV$ register without penalty (since it is independent of the rest of the registers using data processing \cite{tomamichel_quantum_2016}), and removing $\CEV$ via suitable chain rules, yielding 
\begin{equation} \label{eq:temp3}
  \begin{aligned}
&\sum_{\cobs} \Pr(\Omega(\cobs) ) \times \\
& 2^{t_1 \left( \mathbb{H}(\bS_A| \EprePA )_{\sigma | \Omega(\cobs) }-\ell(\cobs) -\ellEV \right) +t_2}
  \end{aligned}
\end{equation}

This quantity is then suitably bounded via the QKD security proof. 

When we wish to analyze the combined protocol, we will obtain, analogous to \cref{eq:temp3},
\begin{equation} \label{eq:combined-entropy-bound}
  \begin{aligned}
& \sum_{\cobs} \Pr(\Omega(\cobs) \wedge \Ommatch )  \times \\
&2^{t_1 \left( \mathbb{H}(\bS_A| \EprePA \HPA)_{\rho | \Omega(\cobs) \wedge \Ommatch}- \ell(\cobs) -\ellEV \right) +t_2}.
  \end{aligned}
\end{equation}
Notice that the output length of the combined hash is $(\elloutput+\ellEV)$. Since the register $\HPA$ storing the hash seed is completely independent of the rest of the registers (even conditioned on the events), it can be removed without penalty using the data processing inequality \cite{tomamichel_quantum_2016}.  Then, we can use the same procedure as before to get rid of the event $\Ommatch$, and instead analyze 
\begin{equation} \label{eq:combined-entropy-bound-with-seed}
\sum_{\cobs} \Pr(\Omega(\cobs) )  2^{t_1 \left( \mathbb{H}(\bS_A| \EprePA )_{\rho | \Omega(\cobs)} - \ell(\cobs) -\ellEV \right) +t_2}.
\end{equation}
This can now be bounded in the same manner as \cref{eq:temp3}, using standard QKD security proof methods.

The above justification can be applied to proofs \cite{tupkary2026rigoroussecurityproofspractical,TupkaryEtAl2026} based on the postselection technique (via leftover hashing lemmas) \cite{nahar_postselection_2024,christandl_postselection_2009}, entropic uncertainty relations \cite{tomamichel_uncertainty_2011}, and entropy accumulation theorems \cite{dupuis_entropy_2020,metger_generalised_2022,arqand_marginal_2025}.

\begin{corollary}[Informal]
Consider the key rates computed in Ref.~ \cite{tupkary2026rigorouscompletesecurityproof}, which presents a rigorous treatment of QKD security. Fix $\varepsilon_\mathrm{PA},\varepsilon_\mathrm{EV} \in [0,1]$, and consider the key rates $\ell(\cobs)$ obtained in that work, for a protocol with separate EV and PA steps. The same key rate is valid when the EV and PA steps are combined, using Toeplitz hashing, as described in \cref{subsec:protocol}, for an overall security parameter of $2\varepsilon_\mathrm{PA}+\varepsilon_\mathrm{EV}$.
\end{corollary}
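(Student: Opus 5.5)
The corollary follows by combining \cref{thm:main} with the entropy-based argument of \cref{subsec:impactonkeyrates}. Concretely, I would show that the key lengths $\ell(\cobs)$ of Ref.~\cite{tupkary2026rigorouscompletesecurityproof} make the hypothesis \cref{eq:matching-secrecy} hold with $\varepsilon=\varepsilon_\mathrm{PA}$. I would also show that Toeplitz hashing satisfies the truncated $(\ellEV,\delta)$-AU$_2$ property with $\delta=\varepsilon_\mathrm{EV}$. \cref{thm:main} then gives security parameter $2\varepsilon_\mathrm{PA}+\varepsilon_\mathrm{EV}$.

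\textbf{Step 1 (correctness hypothesis).} Truncating a uniformly random Toeplitz matrix to its last $\ellEV$ rows gives a uniformly random $\ellEV\times\ellprePA$ Toeplitz matrix. Such a family is universal$_2$ on $\ellEV$ output bits, so $\delta=2^{-\ellEV}$. In the separate-EV protocol, $\varepsilon_\mathrm{EV}$ is obtained by choosing an EV hash with collision probability at most $\varepsilon_\mathrm{EV}$ on the tag length $\ellEV$. The same choice of $\ellEV$ therefore gives $\delta\le\varepsilon_\mathrm{EV}$ (cf.\ Appendix~\ref{app:toeplitz-truncated-au2}).

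\textbf{Step 2 (secrecy hypothesis).} Conditioned on $\OmFK{\elloutput}\wedge\Ommatch$, I would apply the leftover hashing lemma to the full hash output $R_A$ of length $\elloutput+\ellEV$. This uses the universal$_2$ property of Toeplitz hashing for each output length. Summing over $\cobs$ gives the expression \cref{eq:combined-entropy-bound}. The seed register $\HPA$ is independent of everything else even after conditioning, so it can be dropped. The conditioning on $\Ommatch$ is then removed by the same entropic property Ref.~\cite{tupkary2026rigorouscompletesecurityproof} uses to remove $\Omega_\mathrm{EV}$. This property concerns conditioning a state on an event, and applies equally to any event of the form $\Omega(\cobs)\wedge\Omega'$; here $\Ommatch$ replaces $\Omega_\mathrm{EV}$. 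The result is \cref{eq:combined-entropy-bound-with-seed}. This expression is term-by-term identical to \cref{eq:temp3}, which the cited work has already bounded by $\varepsilon_\mathrm{PA}$ for its $\ell(\cobs)$. Since that bound holds for every attack, \cref{eq:matching-secrecy} holds with $\varepsilon=\varepsilon_\mathrm{PA}$.

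\textbf{Main obstacle.} The delicate step is verifying that the event-removal step in the cited framework genuinely applies to $\Ommatch$, which is defined on $\bS_A,\bS_B$ rather than on public EV outputs. I would check this first. Two facts should make it work. First, such lemmas typically only need the event to be defined on classical registers, incurring a $\Pr$-weighted penalty that is absorbed exactly as before. Second, $\Ommatch$ is not defined on any register Eve sees, which is harmless because $\Pr(\Omega(\cobs)\wedge\Ommatch)\le\Pr(\Omega(\cobs))$ and the penalty is monotone. If a smoothing variant is used instead, the additive smoothing constant is unchanged. A secondary check is that the cited key-length formula already subtracts $\ellEV$ leaked bits, so that the output length $\elloutput+\ellEV$ of the combined hash matches the $-\ellEV$ term in \cref{eq:temp3} and the key rate is unchanged.
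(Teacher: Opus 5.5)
Your proposal is correct and follows essentially the same route as the paper: invoke \cref{thm:main} with $\varepsilon=\varepsilon_\mathrm{PA}$ and $\delta=2^{-\ellEV}\le\varepsilon_\mathrm{EV}$ from Toeplitz hashing. You then obtain the $\Ommatch$-partial secrecy hypothesis by applying the leftover hashing lemma to the full $(\elloutput+\ellEV)$-bit output, discarding the independent seed $\HPA$, and removing $\Ommatch$ just as $\Omega_\mathrm{EV}$ is removed, so that \cref{eq:combined-entropy-bound-with-seed} coincides with \cref{eq:temp3}. Your derivation of the truncated property (the last $\ellEV$ rows form a uniformly random Toeplitz matrix) is only a cosmetic variant of the paper's argument via uniformity of $Mx$, and the event-removal step you flag is indeed unproblematic: classicality of $\bS_A,\bS_B$ gives $\rho_{\bS_A\EprePA\wedge\Ommatch}\le\rho_{\bS_A\EprePA}$, which is the property the standard event-conditioning bounds exploit.
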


However, it seems unclear whether proofs based on ``virtual phase error correction'' \cite{koashi_simple_2009} have this property. In particular, the fundamental logic behind such proofs is to use the privacy amplification procedure on Alice's side to construct a ``virtual phase error correction protocol'', in which the success probability of the virtual protocol can be used to derive properties about the real state between Alice's final key and Eve's side-information. However, these properties are typically isolated solely to the state of Alice's final key and Eve's side-information, without any conditioning on events determined by Bob's error-corrected value. Hence it is unclear if such proofs are compatible with the approach described in this work. 

We observe that in Ref.~\cite{TsurumaruEtAl2025}, to handle error-verification loopholes in virtual phase error correction proofs,
it was proposed to use phase-error estimates to bound the smooth min-entropy instead of the success probability of virtual phase error correction, and invoke event-conditioning properties of the former to obtain the desired security properties.  From our perspective, this suggests that the resulting proof is more naturally interpreted as being based on smooth min-entropy. Accordingly, the conclusions obtained in this approach appear to rely on entropic arguments of the type presented above, rather than following directly from the virtual phase error correction picture itself.

\begin{remark} \label{remark:AlicevsBob}
Note that either Alice or Bob can do the first two steps of the protocol from \cref{subsec:protocol}; The third step is the same for both parties.
In the fourth step, one party announces their last $\ellEV$ bits, and the other party only announces whether the bits matched or not.  This is the only step where the actions of Alice and Bob are not symmetric. We choose Alice to announce her bits, and in \cref{thm:main}, reduce the analysis to secrecy wrt Alice's key. We could, without loss of generality, have chosen Bob in both places. 
\end{remark}

\section{Conclusion}\label{sec:conclusion}
In this work, we showed that the security of a QKD protocol with a combined
EV--PA step can be obtained directly from a slightly modified secrecy statement
for the protocol before the final EV announcement. Beyond this secrecy
statement, our result only requires the hashing procedure to satisfy the
property in \cref{def:tail-au2}. We showed that random Toeplitz hashing, which is
widely used in QKD, satisfies this property. We also explained how standard
entropy-based security proofs using the leftover hashing lemma can be adapted
to establish the required secrecy statement.

Our result therefore shows how a QKD protocol with a combined EV--PA step can be treated
rigorously in practical security proofs, without any modification to the key rate expressions,  and with only a minor worsening of the security
parameter.

\begin{acknowledgments}
We used OpenAI's GPT models and Anthropic's Claude models during
the development and preparation of this work. The initial proof
approach and the anticipated factor of two in the secrecy bound
were supplied by the authors. The requirement for additional conditioning on $\Ommatch$ was identified by the models, along with the counter example in Appendix~\ref{app:random-matrix-counterexample}. They also assisted with
drafting and editing the manuscript, which was primarily written by DT. All authors are responsible for the entirety of this paper.

This work was funded by the NSERC Alliance QUINT and
DND-MicroNet. It was primarily conducted at the Institute for Quantum Computing, University of Waterloo, which is funded
by the Government of Canada through ISED. DT acknowledges the support of the Natural Sciences and Engineering
Research Council of Canada (NSERC) (ALLRP-578455-2022). ZW is funded by the Mike and Ophelia Lazaridis Fellowship. DT thanks Hagai Eisenberg, Yehuda Pilnak for drawing his attention to this problem. 
\end{acknowledgments}

\bibliography{references_verified}

\appendix

\section{Counterexample using random full-row-rank matrices}
\label{app:random-matrix-counterexample}
One may attempt to prove a variation of \cref{thm:main} which does not have the conditioning on $\Ommatch$ in the required secrecy statement. We give a counterexample based on a standard family of random linear
hash functions. It shows that secrecy of the complete hash output
before error verification does not imply security of the combined
protocol unless the secrecy statement involves states partial on
$\Ommatch$.

Let
\begin{equation}
 r\coloneq \elloutput+\ellEV,
 \qquad
 n>r,
\end{equation}
and let $\mathcal{A}_{r,n}$ be the set of all full-row-rank matrices in
$\mathbb{F}_2^{r\times n}$. The hash seed is a matrix
$A\in\mathcal{A}_{r,n}$ sampled uniformly, and
\begin{equation}
 f_A(s)\coloneq As.
 \label{eq:full-rank-hash-family}
\end{equation}
The first $\elloutput$ output bits are retained, while the last
$\ellEV$ bits are announced.

\begin{lemma}[Random full-row-rank counterexample]
\label{lem:full-rank-counterexample}
The family in \cref{eq:full-rank-hash-family} has the following
properties:
\begin{enumerate}
\item it is universal$_2$;
\item its last $\ellEV$ output bits are
      $(\ellEV,2^{-\ellEV})$-almost-universal$_2$;
\item there is a state for which Alice's complete hash output
      $R_A=(K_A,T_A)$ is perfectly uniform and independent of the
      public seed before error verification; but
\item the output of the combined protocol satisfies
\begin{equation}
\begin{aligned}
&\dist{
 \rho_{K_AK_B\Efin\wedge\Omfin{\elloutput}}
}{
 \tau_{K_AK_B}^{(\elloutput)}
 \otimes
 \rho_{\Efin\wedge\Omfin{\elloutput}}
}
\\
&\hspace{3em}\geq
 \frac{2^r-2^{\ellEV}}{4(2^n-1)}.
\end{aligned}
\label{eq:full-rank-final-lower-bound}
\end{equation}
\end{enumerate}
\end{lemma}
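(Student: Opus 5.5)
The plan is to prove the four items in order. Items 1 and 2 follow from a symmetry argument for the distribution of $A$. Items 3 and 4 use one explicit state in which the pre-amplification strings disagree in a way Eve can exploit.

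\textbf{Items 1 and 2.} For distinct $x,y$ put $z=x\oplus y\neq 0$, so $\Pr_A[Ax=Ay]=\Pr_A[Az=0]$. The uniform distribution on $\mathcal{A}_{r,n}$ is invariant under $A\mapsto AG$ for $G\in GL_n(\mathbb{F}_2)$. Since $GL_n(\mathbb{F}_2)$ acts transitively on nonzero vectors, $\Pr_A[Az=0]$ does not depend on $z$. Averaging over all $2^n-1$ nonzero $z$ and using $|\ker A|=2^{n-r}$ gives
\begin{equation*}
\Pr_A[Az=0]=\frac{2^{n-r}-1}{2^n-1}\leq 2^{-r},
\end{equation*}
which is universality$_2$.

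For item 2, the last $\ellEV$ rows $A_T$ of a uniform full-row-rank $A$ are themselves a uniform full-row-rank $\ellEV\times n$ matrix. This holds because every full-rank $\ellEV\times n$ matrix extends to the same number of full-rank $r\times n$ matrices. The same averaging then gives
\begin{equation*}
\Pr_A[A_Tz=0]=\frac{2^{n-\ellEV}-1}{2^n-1}\leq 2^{-\ellEV}.
\end{equation*}

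\textbf{Item 3.} I would take $\bS_A$ uniform on $\{0,1\}^n$ and set $\bS_B=\bS_A\oplus e$, where $e\neq 0$ has a distribution still to be chosen. The choice I would try first is $e$ uniform over nonzero vectors and stored in $\EprePA$. Because $A$ has full row rank, $A\bS_A$ is uniform for every fixed $A$, so $R_A$ is perfectly uniform and independent of $\HPA$. It is also independent of $e$, because $\bS_A$ is independent of $e$. So even the unconditioned analogue of \cref{eq:matching-secrecy} holds with $\varepsilon=0$.

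\textbf{Item 4.} Here the key event is
\begin{equation*}
\Omega_{\mathrm{bad}}=\{A_Te=0\}\wedge\{A_Ke\neq 0\},
\end{equation*}
where $A_K$ denotes the first $\elloutput$ rows of $A$. On $\Omega_{\mathrm{bad}}$ the check passes but $K_A\neq K_B$. The ideal state $\tau^{(\elloutput)}_{K_AK_B}\otimes\rho_{\Efin\wedge\Omfin{\elloutput}}$ puts no weight on $K_A\neq K_B$. Applying the projector onto $\{K_A\neq K_B\}$ and using data processing, the trace distance is therefore at least $\tfrac12\Pr[\Omega_{\mathrm{bad}}]$, which is already a nonzero lower bound.

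By the symmetry argument $\Pr[\Omega_{\mathrm{bad}}]$ is independent of the distribution of $e$. Equivalently, it is the fraction of nonzero vectors lying in $\ker A_T\setminus\ker A$:
\begin{equation*}
\Pr[\Omega_{\mathrm{bad}}]=\frac{2^{n-\ellEV}-2^{n-r}}{2^n-1}.
\end{equation*}

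Eve can do better than this count alone. She holds $e$, $A$ and $T_A$, and hence knows $K_A\oplus K_B=A_Ke$, which is a correlation that is perfect in the real state but absent from the ideal one. I would make this quantitative by applying the projector onto the subspace consistent with Eve's predicted value of $K_A\oplus K_B$. That replaces the crude factor $\tfrac12$ by a count over the $2^{\elloutput}-1$ nonzero values of $A_Ke$.

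\textbf{Main obstacle.} The hardest step is matching the exact constant $\frac{2^r-2^{\ellEV}}{4(2^n-1)}$. The bound $\tfrac12\Pr[\Omega_{\mathrm{bad}}]$ is not always at least that constant: it falls short, for example, when $n=r+1$ and $\ellEV>3$. If the distinguisher refinement above does not close the gap, I would change the state. The change I would try is to let the mismatch occur only with probability $\tfrac12$, flagged in a register correlated with Eve's system. I would then recompute the resulting partial states directly, computing both $\rho_{\Efin\wedge\Omfin{\elloutput}}$ and the real state, and extract the factor $\tfrac14$ from a single projector measurement.
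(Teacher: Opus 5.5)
Items 1--3 are fine; for items 1 and 2, your $GL_n(\mathbb{F}_2)$-averaging with $|\ker A|=2^{n-r}$ and $|\ker A_T|=2^{n-\ellEV}$ is a clean alternative to the paper's row-by-row product. Item 4, however, has a genuine gap that no sharper distinguisher can close: your state violates the claimed bound for some admissible parameters.

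Because $R_A=A\bS_A$ is uniform and independent of $(A,e)$, on acceptance $K_A$ is exactly uniform and independent of everything Eve holds ($e$, $A$, $T_A$, $\FEV$). The only defect is $K_B=K_A\oplus A_Ke$. Blockwise in $(A,e,T_A)$, the real and ideal states are therefore either identical (when $A_Ke=0$) or of disjoint support. So the trace distance is exactly $\Pr[\Omega_{\mathrm{bad}}]=\frac{2^{n-\ellEV}-2^{n-r}}{2^n-1}$. Your factor $\tfrac12$ is lossy, since both operators have equal trace, but removing it is the most any distinguisher can achieve. For $\elloutput\geq1$ this is smaller than $\frac{2^r-2^{\ellEV}}{4(2^n-1)}$ whenever $n<r+\ellEV-2$. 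For example, $\elloutput=1$, $\ellEV=4$, $n=6$ gives $2/63$ against the required $4/63$.

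The same obstruction defeats your fallback. Suppose $\bS_A$ is uniform and independent of the mismatch pattern and Eve's register jointly. Then $K_A$ stays perfectly secret, and the distance is at most $\Pr[A_T(\bS_A\oplus\bS_B)=0,\ \bS_A\neq\bS_B]<2^{-\ellEV}$. This is below the target, e.g.\ for $n=r+1$, $\ellEV\geq4$. Such states only exhibit the correctness failure already budgeted as $\delta$ in \cref{thm:main}. In your state $\Ommatch$ never occurs, so even the conditioned hypothesis of \cref{thm:main} holds with $\varepsilon=0$. Hence your example cannot separate the two secrecy notions, which is the purpose of the lemma.

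The missing idea is to make the mismatch a function of Alice's own string, hidden from Eve, so that the accept/reject decision itself leaks information about $R_A$. The paper takes $\bS_B=(0,S_{A,2}^n)$, i.e.\ $\bS_A\oplus\bS_B=S_{A,1}e_1$, with Eve holding nothing, so item 3 still holds. It then singles out the seeds $\Gamma$ for which $e_1^{\mathsf T}=q^{\mathsf T}A$ for some $q=(q_K,q_T)$ with $q_K\neq0$, so that $q^{\mathsf T}R_A=S_{A,1}$. Let $a=(a_K,a_T)$ be the first column of $A$.

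If $a_T\neq0$, acceptance is equivalent to $S_{A,1}=0$. Then $q_K^{\mathsf T}K_A=q_T^{\mathsf T}T_A$ hands Eve a nontrivial parity of $K_A$: conditional distance $\tfrac12$ times acceptance probability $\tfrac12$ gives $\tfrac14$. If $a_T=0$, the protocol always accepts, and the keys differ by $a_K\neq0$ with probability $\tfrac12$, contributing at least $\tfrac12$.

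Finally, $\{q^{\mathsf T}A:q_K\neq0\}$ is the row space of $A$ minus that of $A_T$, of size $2^r-2^{\ellEV}$. So your own $GL_n(\mathbb{F}_2)$-symmetry argument gives $\Pr[\Gamma]=\frac{2^r-2^{\ellEV}}{2^n-1}$, and hence the stated bound $\tfrac14\Pr[\Gamma]$.
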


\begin{proof}

\textbf{Hashing properties: }
We first verify the two hashing properties. Fix a nonzero
$z\in\mathbb{F}_2^n$.  For showing it is universal$_2$, we need to bound $\Pr_A[Az=0]$.

Notice that the subspace
\[
z^\perp
\coloneq
\{u\in\mathbb{F}_2^n:u^{\mathsf T}z=0\}
\]
contains $2^{n-1}$ vectors. Suppose that $i$ linearly independent rows of $A$ have already been chosen in
$z^\perp$. Their span contains $2^i$ vectors. The next row has
$2^n-2^i$ possible values outside this span, of which
$2^{n-1}-2^i$ lie in $z^\perp$. Therefore,
\begin{align}
\Pr_A[Az=0]
&=
\prod_{i=0}^{r-1}
\frac{2^{n-1}-2^i}{2^n-2^i}
\nonumber\\
&=
\frac{2^{n-r}-1}{2^n-1}
<2^{-r}.
\label{eq:full-rank-complete-collision}
\end{align}
This proves that the complete hash family is universal$_2$. The same proof applies to the last $\ellEV$ rows of $A$ (where we can imagine that we sample the rows in reverse order). 

\textbf{State:} We will now specify the state that is hashed.
Let $S_{A}$ be uniform on $\mathbb{F}^n_2$, and let $S_{A,1}$ be the first bit and $
S_{A,2}^n
\coloneq
(S_{A,2},\ldots,S_{A,n})$ be the remaining bits. Define 
\begin{equation}
\bS_A=(S_{A,1},S_{A,2}^n),
\qquad
\bS_B=(0,S_{A,2}^n).
\label{eq:full-rank-counterexample-state}
\end{equation}
Thus $\Ommatch$ is the event $S_{A,1}=0$. 
Since $\bS_A$ is uniform, it is easy to show that
$A\bS_A$ is uniform for every choice of $A$. Hence
\begin{equation}
\rho_{R_A\HPA}
=
\tau_{R_A}^{(r)}
\otimes
\rho_{\HPA}.
\label{eq:full-rank-plain-secrecy}
\end{equation}
In particular, the ordinary pre-verification secrecy error is zero.

\textbf{Eve learns one bit:}  The idea is to identify public
seeds for which acceptance reveals one nontrivial parity of Alice's
retained key.

Write
\begin{equation}
A=
\begin{pmatrix}
a&\widetilde{A}
\end{pmatrix},
\qquad
a\in\mathbb{F}_2^r,
\qquad
\widetilde{A}\in\mathbb{F}_2^{r\times(n-1)},
\end{equation}
and split $a=(a_K,a_T)$ according to the retained and announced
output bits. Alice's and Bob's outputs are
\begin{equation}
R_A
=
S_{A,1}a+\widetilde{A}S_{A,2}^n,
\qquad
R_B
=
\widetilde{A}S_{A,2}^n.
\label{eq:full-rank-counterexample-outputs}
\end{equation}
They therefore accept if and only if
\begin{equation}
S_{A,1}a_T=0.
\label{eq:full-rank-acceptance}
\end{equation}

Define
\begin{equation}
\Gamma
\coloneq
\left\{
A\in\mathcal{A}_{r,n}:
\begin{array}{c}
\text{there exists }q=(q_K,q_T)\in\mathbb{F}_2^r
\text{ such that}\\[1mm]
q_K\neq0
\text{ and }
q^{\mathsf T}A=e_1^{\mathsf T}
\end{array}
\right\},
\label{eq:full-rank-bad-seeds}
\end{equation}
where $e^{\mathsf T}_1=(1,0,\ldots,0)$. 
Hence, for every $A\in\Gamma$,
\begin{equation}
q_K^{\mathsf T}K_A+q_T^{\mathsf T}T_A
=
q^{\mathsf T}R_A
=
q^{\mathsf T}A\bS_A
=
S_{A,1}.
\label{eq:full-rank-output-parity}
\end{equation}

We first describe what happens for such a seed. 

Suppose first that $a_T\neq0$. By
\cref{eq:full-rank-acceptance}, acceptance is equivalent to
$S_{A,1}=0$. Equation~\eqref{eq:full-rank-output-parity} then gives
\begin{equation}
q_K^{\mathsf T}K_A
=
q_T^{\mathsf T}T_A.
\label{eq:full-rank-revealed-parity}
\end{equation}
The seed determines $q$ (via $A$), and $T_A$ is publicly announced. Thus Eve
learns the nontrivial parity $q_K^{\mathsf T}K_A$ of the retained key. Therefore, the key is no longer uniform and it is easy to show that the distance of the final state from uniform is 
$ 1/2$. Since acceptance
occurs with probability $1/2$, each such seed contributes $1/4$ to
the subnormalized security distance.

If instead $a_T=0$, the protocol always accepts. Recall that $q_K^{\mathsf T}a_K=1$, so $a_K \neq 0$. Whenever $S_{A,1}=1$, Alice's and Bob's retained keys
differ by $a_K$ (\cref{eq:full-rank-counterexample-outputs}). This occurs with probability $1/2$, so each such
seed contributes at least $1/2$ to the trace distance.

Averaging the two contributions gives
\begin{align}
&d\left(
\rho_{K_AK_BE_{\mathrm{fin}}}\wedge\Omega_{\elloutput}^{\mathrm{fin}},
\tau_{K_AK_B}^{(\elloutput)}
 \otimes
\rho_{E_{\mathrm{fin}}}\wedge\Omega_{\elloutput}^{\mathrm{fin}}
\right) \nonumber \\
& \geq  
\frac{1}{4}\Pr_A[A\in\Gamma,\ a_T\neq0]
+
\frac{1}{2}\Pr_A[A\in\Gamma,\ a_T=0]
\nonumber\\
& \geq  
\frac{1}{4}\Pr_A[A\in\Gamma]
\label{eq:full-rank-final-lower-bound}
\end{align}

\textbf{Probability of bad seeds:}
For a fixed full-row-rank matrix $A$, define
\[
\mathcal{V}(A)
\coloneq
\left\{
q^{\mathsf T}A:
q=(q_K,q_T)\in\mathbb{F}_2^r,\ q_K\neq0
\right\}.
\]
The map $q\mapsto q^{\mathsf T}A$ is injective. Indeed, if
$q^{\mathsf T}A=(q')^{\mathsf T}A$, then $
(q-q')^{\mathsf T}A=0$,
and the full row rank of $A$ implies $q=q'$.

There are $2^{\elloutput}-1$ nonzero choices for $q_K$ and
$2^{\ellEV}$ arbitrary choices for $q_T$. Therefore,
\begin{equation}
|\mathcal{V}(A)|
=
(2^{\elloutput}-1)2^{\ellEV}
=
2^r-2^{\ellEV}.
\label{eq:full-rank-number-of-bad-vectors}
\end{equation}

The event $\Gamma$ is precisely the event
\[
e_1^{\mathsf T}\in\mathcal{V}(A).
\]
For a uniformly random full-row-rank matrix $A$, every nonzero vector
in $\mathbb{F}_2^n$ has the same probability of belonging to
$\mathcal{V}(A)$. Since $\mathcal{V}(A)$ always contains
$2^r-2^{\ellEV}$ of the $2^n-1$ nonzero vectors, it follows that
\begin{equation}
\Pr_A[\Gamma]
=
\frac{2^r-2^{\ellEV}}{2^n-1}.
\label{eq:full-rank-bad-seed-probability}
\end{equation}
This concludes our proof.
\end{proof}

\section{Truncated universality of Toeplitz hashing}
\label[appendix]{app:toeplitz-truncated-au2}

\begin{lemma}
\label{lem:toeplitz-truncated-au2}
Suppose that $f_h(s)=M_hs$, where $M_h$ is a uniformly random binary
Toeplitz matrix of size
$(\elloutput+\ellEV)\times\ellprePA$. Then the family satisfies the
truncated $(\ellEV,2^{-\ellEV})$-AU$_2$ property.
\end{lemma}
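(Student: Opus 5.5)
The key observation is that the last $\ellEV$ rows of a Toeplitz matrix themselves form a uniformly random Toeplitz matrix. We then show directly that a random Toeplitz matrix with $\ellEV$ rows sends any fixed nonzero vector to zero with probability exactly $2^{-\ellEV}$.

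\emph{Reduction.} A Toeplitz matrix $M_h$ of size $(\elloutput+\ellEV)\times\ellprePA$ is determined by $\elloutput+\ellEV+\ellprePA-1$ independent uniform bits, one per diagonal, with $(M_h)_{ij}=h_{i-j}$. Let $M'_h$ be its last $\ellEV$ rows. Then $M'_h$ is a Toeplitz matrix of size $\ellEV\times\ellprePA$, and it is determined by a contiguous block of $\ellEV+\ellprePA-1$ of these diagonal bits. The marginal of a uniform seed on that block is uniform, so $M'_h$ is a uniformly random $\ellEV\times\ellprePA$ Toeplitz matrix. Since $\operatorname{last}_{\ellEV}(f_h(s))=M'_h s$, linearity turns the collision event for distinct $s,s'$ into the event $M'_h z=0$, where $z\coloneq s\oplus s'\neq 0$.

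\emph{Counting argument.} Let $j^*$ be the largest index with $z_{j^*}=1$. Each output bit of $M'_h z$ has the form
\[
(M'_h z)_i=h_{i-j^*}\oplus g_i,
\]
where $g_i$ depends only on diagonal variables $h_{i-j}$ with $j<j^*$. We process the rows in increasing order of $i$.
\begin{itemize}
\item The variable $h_{i-j^*}$ is the diagonal variable of row $i$ with the most negative offset.
\item For rows $i'<i$, every diagonal variable used has offset $i'-j\ge i'-\ellprePA$. So $h_{i-j^*}$ either does not occur in earlier rows, or does occur in them.
\end{itemize}
The expected obstacle is making this ordering precise, and I would sidestep it with a cleaner choice. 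Let $j_0$ be the \emph{smallest} index with $z_{j_0}=1$, and write
\[
(M'_h z)_i=h_{i-j_0}\oplus(\text{terms } h_{i-j},\ j>j_0).
\]
The offsets $i-j$ with $j>j_0$ are all smaller than $i-j_0$. Hence, ordering rows by increasing $i$, the variable $h_{i-j_0}$ is the largest-offset variable appearing in row $i$. Moreover, it does not appear in any earlier row $i'<i$, since those rows only use offsets at most $i'-j_0<i-j_0$.

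Conditioned on the earlier rows' variables, each new bit $(M'_h z)_i$ is therefore uniform and independent of the previous bits. The map $h\mapsto M'_h z$ is thus uniform on $\{0,1\}^{\ellEV}$, which gives $\Pr_h[M'_h z=0]=2^{-\ellEV}$. This holds for every admissible $\ellprePA$ and $\elloutput$, so the truncated $(\ellEV,2^{-\ellEV})$-AU$_2$ property of \cref{def:tail-au2} follows.
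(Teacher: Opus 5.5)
Your proof is correct and follows essentially the paper's argument. The paper likewise takes $j_0$ to be the first nonzero coordinate of $x=s\oplus s'$ and determines the diagonal bits $u_{1-j_0},\ldots,u_{m-j_0}$ one equation at a time. It differs only in that it counts solutions to show the full output $M_hx$ is uniform on $\mathbb{F}_2^m$ and then marginalizes to the last $\ellEV$ bits, whereas you first restrict to the last $\ellEV$ rows; you should also delete the abandoned $j^*$ paragraph, whose second bullet is vacuous.
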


\begin{proof}
Set
\begin{equation}
m\coloneq\elloutput+\ellEV,
\qquad
n\coloneq\ellprePA.
\end{equation}
An $m\times n$ binary Toeplitz matrix is specified by its $m+n-1$
diagonal entries and hence there are $2^{m+n-1}$ such matrices.

Fix a nonzero $x\in\mathbb{F}_2^n$ and an arbitrary
$y\in\mathbb{F}_2^m$. We claim that exactly $2^{n-1}$ Toeplitz
matrices satisfy
\begin{equation}
Mx=y.
\label{eq:toeplitz-linear-system}
\end{equation}

Write $M_{ij}=u_{i-j}$, where $u_{1-n},\ldots,u_{m-1}\in\mathbb F_2$ are the bits specifying the diagonals of $M$.
Let $j_0$ be the first nonzero coordinate of $x$. After fixing
the $n-1$ diagonal entries other than
\begin{equation}
u_{1-j_0},u_{2-j_0},\ldots,u_{m-j_0},
\end{equation}
the $m$ equations in \cref{eq:toeplitz-linear-system} determine these
remaining $m$ entries uniquely, one equation at a time. Thus there is
one solution for every assignment of the other $n-1$ diagonal
entries, giving exactly $2^{n-1}$ solutions.

Consequently,
\begin{equation}
\Pr_M[Mx=y]
=
\frac{2^{n-1}}{2^{m+n-1}}
=
2^{-m}.
\end{equation}
Therefore, $Mx$ is uniform on $\mathbb{F}_2^m$ for every nonzero $x$.

Now fix $s\neq s'$ and set $x=s\oplus s'\neq0$. Since $Mx$ is
uniform, its last $\ellEV$ bits are uniform on
$\mathbb{F}_2^{\ellEV}$. Hence
\begin{align}
&\Pr_h\!\left[
\operatorname{last}_{\ellEV}\!\left(f_h(s)\right)
=
\operatorname{last}_{\ellEV}\!\left(f_h(s')\right)
\right]
\nonumber\\
&\qquad=
\Pr_h\!\left[
\operatorname{last}_{\ellEV}(M_hx)=0
\right]
=
2^{-\ellEV},
\end{align}
as required.
\end{proof}

\end{document}